\let\proof\@undefined
\let\endproof\@undefined
\newcommand{\cC}{{\mathcal C}}
\newtheorem{theorem}{Theorem}
\author{Leo van Iersel, Steven Kelk, Nela Leki{\'c}, Leen Stougie}
\begin{document}
\title{A short note on exponential-time algorithms for hybridization number}
\maketitle

\begin{abstract}
\noindent
In this short note we prove that, given two (not necessarily binary) rooted phylogenetic trees $T_1, T_2$ on the same set of taxa $X$, where $|X|=n$, the hybridization number of $T_1$ and $T_2$ can be computed in time $O^{*}(2^n)$ i.e. $O( 2^{n} \cdot poly(n) )$. The result also means that a Maximum Acyclic Agreement Forest (MAAF) can be computed within the same time bound.
\end{abstract}

\section{Introduction}
Let~$X$ be a finite set. A \emph{rooted phylogenetic} $X$-\emph{tree}, henceforth abbreviated to \emph{tree},  is a rooted tree with no vertices with indegree~1 and outdegree~1, a root with indegree~0 and outdegree at least~2, and leaves bijectively labelled by the elements of~$X$.
A \emph{rooted phylogenetic network}, henceforth abbreviated to \emph{network}, is a directed acyclic graph with no vertices with indegree~1 and outdegree~1 and leaves bijectively labelled by the elements of~$X$.

A tree~$T$ is \emph{displayed} by a network~$N$ if~$T$ can be obtained from a subgraph of~$N$ by contracting edges. Note that, when $T$ is not binary, this means that the image of $T$ inside $N$ can be more ``resolved'' than $T$ itself.  Using~$d^-(v)$ to denote the indegree of a vertex~$v$, a \emph{reticulation} is a vertex~$v$ with~$d^-(v)\geq 2$. The \emph{reticulation number} of a network~$N$ with vertex set~$V$ is given by

\[
r(N)=\sum_{v\in V : d^-(v)\geq 2}(d^-(v)-1).
\]
Given two (not necessarily binary) trees $T_1$, $T_2$, the \emph{hybridization number} problem (originally introduced in \cite{baroni05}) asks us to minimize $r(N)$ ranging over all networks that display $T_1$ and $T_2$. 

There has been extensive work on fixed-parameter tractable (FPT) algorithms for the hybridization number problem.
The fastest such algorithm currently works only on binary trees and has a running time of $O( 3.18^{r} \cdot poly(n) )$ where $r$ is the hybridization number and $n=|X|$ \cite{whidden2013fixed}. Given that $n$ is a trivial upper bound on the hybridization number of two trees this immediately yields an exponential-time algorithm with running time $O^{*}( 3.18^{n} )$
for the binary case. In \cite{firststeps} a $O^{*}( 3^n )$ algorithm was presented (again restricted to the binary case). In \cite{elusiveness} a $O^{*}( 2^n )$ algorithm was implied but this relied on the claimed equivalence between the softwired cluster model and the model described in \cite{bafnabansal2006}, which was not formally proven. Here we describe explicitly a $O^{*}(2^n)$ algorithm that does not rely on this equivalence. This also means that a
Maximum Acyclic Agreement Forest (MAAF) can be computed within the same time bound (see 
e.g. \cite{nonbinCK} for related discussions).

For further background and definitions on hybridization number and phylogenetic networks we refer the reader to recent articles such as \cite{terminusest}. For background and definitions on softwired clusters (which the proof below uses heavily) see \cite{elusiveness}.
\pagebreak
\section{Results}

\begin{theorem}
Let $T_1$ and $T_2$ be two (not necessarily binary) rooted phylogenetic trees on the same set of taxa $X$, where
$|X|=n$. Then the hybridization number $h(T_1, T_2)$ can be computed in time $O^{*}(2^{n})$.
\end{theorem}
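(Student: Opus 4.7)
The plan is to design a dynamic program over subsets of $X$. For each $S \subseteq X$, let $f(S)$ be the minimum reticulation number of a rooted phylogenetic network on leaf set $S$ that displays the restrictions $T_1|S$ and $T_2|S$; the target is $f(X) = h(T_1, T_2)$, with the base case $f(\{x\}) = 0$. Since there are $2^n$ subsets, to hit the claimed $O^{*}(2^n)$ running time the whole task reduces to exhibiting a correct recurrence with only $\mathrm{poly}(n)$ transitions per state.

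For the recurrence I would inspect the top of an optimal network $N_S$ and peel off a single ``block'' $A \subseteq S$ corresponding to one of the subnetworks hanging off the root (together with any reticulations it hosts). The key structural lemma would say that $A$ can always be chosen to be the leaf set of a subtree of $T_1|S$ or of $T_2|S$. Because each of $T_1$ and $T_2$ has only $O(n)$ subtrees, this yields at most $O(n)$ candidates for $A$ per state $S$, each testable in $\mathrm{poly}(n)$ time. Schematically,
\[
f(S) \;=\; \min_{A} \Bigl( f(A) + f(S\setminus A) + \Delta(S,A) \Bigr),
\]
where $\Delta(S,A) \in \{0,1\}$ is a small correction counting whether detaching $A$ from $S$ forces an additional reticulation at the root.

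The hard part is justifying the structural lemma directly, without appealing to the unverified softwired-cluster equivalence flagged in \cite{elusiveness}. I would argue combinatorially via the acyclic agreement forest (AAF) viewpoint: $h(T_1, T_2)$ equals one less than the size of a minimum AAF of $T_1, T_2$, and an optimal network can be read back from an optimal AAF in such a way that its top-level blocks correspond to AAF components, each of which is a subtree common to (possibly refined versions of) $T_1|S$ and $T_2|S$. The non-binary case is where genuine care is needed, since polytomies in $T_1$ or $T_2$ may be refined inside the network and the notion of ``subtree of $T_i|S$'' must accommodate this; the refinement must also be tracked when computing $\Delta$. Once the lemma is established the MAAF corollary is immediate: the AAF corresponding to the optimal network extracted by the DP is a minimum-size AAF, i.e.\ a MAAF, and is produced within the same $O^{*}(2^n)$ bound.
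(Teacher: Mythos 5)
There is a genuine gap: the entire proof rests on the ``key structural lemma'' and the recurrence $f(S)=\min_A(f(A)+f(S\setminus A)+\Delta(S,A))$, and neither is established; as stated, both are doubtful. First, the leaf sets of the subnetworks hanging off the root of an optimal network do not in general partition $S$, because reticulation edges cross between the root's children, so ``peeling off a block $A$ together with any reticulations it hosts'' does not yield a well-defined candidate set, let alone one that is a clade of $T_1|S$ or of $T_2|S$. Second, the combination direction of your recurrence fails: if $A$ is a clade of only one of $T_1|S$, $T_2|S$, then gluing optimal networks for $A$ and for $S\setminus A$ with at most one extra reticulation need not display the other tree at all, so $\min_A\bigl(f(A)+f(S\setminus A)+\Delta\bigr)$ with $\Delta\in\{0,1\}$ is not an upper bound certificate; conversely you give no argument that some optimal network decomposes along such an $A$ with cost correction at most $1$. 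Your AAF sketch points in the right direction but proves neither statement, and in the non-binary case (which you flag but do not resolve) the refinement issue is exactly where such an argument is delicate.

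The correct version of the decomposition is different from the one you propose: what gets peeled off is a \emph{maximal common pendant subtree of both restricted trees} (equivalently, a maximal ST-set of the cluster set $Cl(T_1|S)\cup Cl(T_2|S)$), which has zero internal cost, and each such removal costs exactly $+1$, stopping when the two restricted trees become compatible. That is, the recurrence is $r(\mathcal{C}) = \min_{S'\in ST(\mathcal{C})}\bigl(1+r(\mathcal{C}\setminus S')\bigr)$, whose correctness the paper does not reprove but imports from proven results in the softwired-cluster literature: $r(\mathcal{C})=h(T_1,T_2)$ for $\mathcal{C}=Cl(T_1)\cup Cl(T_2)$ and the characterization of $r(\mathcal{C})$ as the minimum length of an ST-set tree sequence, together with the facts that there are at most $n$ maximal ST-sets and they are computable in polynomial time. (Note that the ``unverified equivalence'' you are trying to avoid concerns the Bafna--Bansal model, not these cluster results, so avoiding clusters is unnecessary.) Your DP-over-subsets shell and the $O^{*}(2^n)$ counting are fine and match the paper, but without a proved replacement for this peeling lemma the proposal does not constitute a proof.
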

\begin{proof}
Let $\cC = Cl(T_1) \cup Cl(T_2)$ be the union of the sets of clusters induced by the edges of the trees $T_1$ and $T_2$. It has been shown
that $r(\cC)$, the minimum reticulation number of a phylogenetic network representing all the clusters in $\cC$, is exactly equal to $h(T_1, T_2)$ \cite[Lemma 12]{elusiveness} and that optimal solutions for one problem can be transformed in polynomial time into
optimal solutions for the other \cite{terminusest}. We hence focus on computation of $r(\cC)$. Recall that an ST-set $S$ of a set of clusters is a subset of $X$ such that $S$ is compatible with every cluster in $\cC$, and such that all clusters in $\cC|S$
are pairwise compatible, where $\cC|S = \{ C \cap S : C \in \cC \}$. (The non-empty ST-sets are in one-to-one correspondence with common pendant subtrees of $T_1$ and $T_2$ \cite{terminusest}). For $X' \subseteq X$, we write $\cC \setminus X'$ to denote $\{ C \setminus X' : C \in \cC \}$.
An ST-set sequence of length $k$ is a sequence $S_1, S_2, \ldots, S_k$ such that each $S_i$ is an ST-set of $\cC_{i-1}$,
where $\cC_{0} = \cC$ and for $1 \leq i \leq k$,  $\cC_i = \cC_{i-1} \setminus S_{i}$. Such a sequence is a \emph{tree} sequence if $\cC_k$ is compatible. Note that if $\cC$ is compatible
then this is characterized by the empty tree sequence and we say that $k=0$. The value $r(\cC)$ is equivalent to the minimum possible length ranging over all ST-set tree sequences \cite[Corollary 9]{elusiveness}. Without loss of generality we can assume that $S_i$ is a \emph{maximal} ST-set sequence i.e. where each $S_i$ is a maximal ST-set of $\cC_{i-1}$. For a given set of clusters on $n$ taxa there are at most $n$ maximal ST-sets, they partition the set of taxa and they can be computed in polynomial time \cite{elusiveness}.
Clearly, $r(\cC) = 0$ if $\cC$ is compatible which can be checked in polynomial time. Otherwise the above observations yield the
following expression, where $ST(\cC)$ is the set of maximal ST-sets of $\cC$:
\begin{equation}
r(\cC) = \min_{S \in ST(\cC)} \bigg ( 1 + r(\cC \setminus S) \bigg )
\end{equation}
This can be computed in time $O^{*}(2^n)$ by standard exponential time dynamic programming. That is, compute $r(\cC)$ by computing $r(\cC | X')$ for all possible $\emptyset \subset X' \subset X$, increasing the cardinality of $X'$ from small to large. Each $r(\cC | X')$ can then be computed by consulting at most $n$ smaller subproblems. This yields an overall running time of $O(2^n \cdot poly(n))$.
\end{proof}

\section{Discussion}

A consequence of the above analysis is that, when solving hybridization number, there are at most $2^{n}$ relevant subproblems and each such subproblem can be characterized by a subset of $X$. Any algorithm that attempts to compute the hybridization number by iteratively pruning maximal common pendant subtrees (equivalently, maximal ST-sets) until the input trees are compatible, can thus easily attain a $O^{*}(2^n)$ upper bound on its running time, at the expense of potentially consuming exponential space. That is, by storing the solutions to subproblems in a look-up table (i.e. hashtable), indexed by the subset of $X$ that characterises the subproblem. 

Finally, an obvious open question that remains is whether the hybridization number of two trees can be computed in time $O^{*}( c^n )$ for any constant $c < 2$.

\bibliographystyle{plain}
\bibliography{MAAFexptime}

\end{document}